\newtheoremstyle{example}{\topsep}{\topsep}%
     {}
     {}
     {\bfseries}
     {}
     {\newline}
     {\thmname{#1}\thmnumber{ #2}\thmnote{ #3}}
\theoremstyle{example}
\theoremstyle{theorem}
\newtheorem{theorem}{Theorem}
\theoremstyle{theorem}
\theoremstyle{proposition}
\theoremstyle{corollary}
\def\keywords{\vspace{.5cm}
{\textit{Keywords}:\,\relax%
}}
\title{On posterior propriety for the Student-$t$ linear regression model under Jeffreys priors}
\author{Catalina A. Vallejos  and   Mark F.J. Steel\\
{\small Department of Statistics,
University of Warwick, Coventry, CV4 7AL, U.K.}\\
{\small C.A.Vallejos-Meneses@warwick.ac.uk and M.Steel@warwick.ac.uk}}
\date{}
\begin{document}

\maketitle

\begin{abstract}
Regression models with fat-tailed error terms are an increasingly popular choice to obtain more robust inference to the presence of outlying observations. This article focuses on Bayesian inference for the Student-$t$ linear regression model under objective priors that are based on the Jeffreys rule. Posterior propriety results presented in \cite{fonsecaetal2008} are revisited and one result is corrected. In particular, it is shown that the standard Jeffreys-rule prior precludes the existence of a proper posterior distribution. 
\end{abstract}

\keywords{Student-$t$ linear regression; Jeffreys prior; Posterior existence}

\section{Introduction} \label{SectionIntroduction}

The normal assumption in linear regression models does not always provide an appropriate fit to real datasets. Data often require more flexible errors, capable of accommodating outlying observations. A popular alternative is to assume a Student-$t$ distribution for the error term \citep[see for example][]{west1984,langeetal1989,fs1999,fonsecaetal2008}. The choice of a prior is very challenging when conducting Bayesian inference under Student-$t$ sampling. While some ``standard'' priors can be adopted for the regression and scale parameters, there is no consensus about a prior distribution for the degrees of freedom ($\nu$). \cite{villawalker2013} provide a comprehensive discussion of the literature. The seminal paper by \cite{fonsecaetal2008} is, as far as we know, the first attempt to base an objective prior for $\nu$ on formal rules and introduces two objective priors based on the Jeffreys rule. They propose the original Jeffreys-rule prior and one of its variants, the independence Jeffreys prior (which treats the regression parameters independently). These priors have been considered in several subsequent articles. \cite{ho2012} and \cite{villawalker2013} used both priors. In the context of skew-$t$ models, the independence Jeffreys prior was  used in \cite{juarezsteel2010} and \cite{brancoetal2012}.

This note is a follow-up of \cite{fonsecaetal2008}. Their posterior propriety results are revisited and one of their results is corrected. In particular, it is shown that the prior based on the original Jeffreys rule precludes the existence of a proper posterior distribution. Nevertheless, the independence Jeffreys prior yields a well-defined posterior distribution.

The Student-$t$ linear regression model is presented in Section \ref{SectionRegression}, which also includes the priors presented in \cite{fonsecaetal2008}. 
Posterior propriety under these priors is examined in Section \ref{SectionPropriety}, while Section \ref{SectionConclusion} concludes.

\section{Bayesian Student-$t$ linear regression model} \label{SectionRegression}

Let $Y=(Y_1,\ldots,Y_n)' \in \mathbb{R}^n$ represent $n$ independent random variables generated by the linear regression model \begin{equation} \label{eqLinearRegression} Y_i = x'_i \beta+\sigma\epsilon_i, \hspace{1cm} i=1,\ldots,n, \end{equation} where $x_i$ is a vector containing the value of $p$ covariates associated with observation $i$, $\beta \in \mathbb{R}^p$ is a vector of regression parameters and $\epsilon_i$ has Student-$t$ distribution with mean zero, unitary scale and $\nu$ degrees of freedom. The Bayesian model is completed using Jeffreys priors, which require the Fisher information matrix. Similarly to \cite{fonsecaetal2008} (they parameterize with respect to $\sigma$ instead), the Fisher information matrix for the model in (\ref{eqLinearRegression}) is given by \begin{small}\begin{equation} I(\beta,\sigma^2,\nu)=
\left(%
\begin{array}{ccc}
  \frac{1}{\sigma^2}\frac{\nu+1}{\nu+3}\sum_{i=1}^n x_i x'_i & 0 & 0 \\
  0 & \frac{n}{2\sigma^4}\frac{\nu}{\nu+3} & -\frac{n}{\sigma^2} \frac{1}{(\nu+1)(\nu+3)} \\
  0 & -\frac{n}{\sigma^2} \frac{1}{(\nu+1)(\nu+3)} & \frac{n}{4} \big[\Psi'(\frac{\nu}{2})-\Psi'(\frac{\nu+1}{2})-\frac{2(\nu+5)}{\nu(\nu+1)(\nu+3)}\big]  \\
\end{array}%
\right),
\end{equation} \end{small} where $\Psi'(\cdot)$ denotes the trigamma function. Hence, the Jeffreys-rule and the independence Jeffreys (which deals separately with the blocks for $\beta$ and $(\sigma^2,\nu)$) 
priors are respectively given by \begin{small}\begin{eqnarray} \pi^J(\beta,\sigma^2,\nu) & \propto &   \frac{1}{(\sigma^2)^{1+p/2}}\left(\frac{\nu+1}{\nu+3}\right)^{p/2} \sqrt{\frac{\nu}{\nu+3}} \sqrt{\Psi'\left(\frac{\nu}{2}\right)-\Psi'\left(\frac{\nu+1}{2}\right) -\frac{2(\nu+3)}{\nu(\nu+1)^2}}, \\
\pi^{I}(\beta,\sigma^2,\nu) & \propto & \frac{1}{\sigma^2} \sqrt{\frac{\nu}{\nu+3}} \sqrt{\Psi'\left(\frac{\nu}{2}\right)-\Psi'\left(\frac{\nu+1}{2}\right) -\frac{2(\nu+3)}{\nu(\nu+1)^2}}. \end{eqnarray} \end{small} These priors have been proposed in \cite{fonsecaetal2008} and can be written as \begin{equation} \label{eqPrior} \pi(\beta,\sigma^2,\nu) \propto \frac{1}{(\sigma^2)^a} \pi(\nu), \end{equation} where $\pi(\nu)$ is the component of the prior that depends on $\nu$, $a=1+p/2$ for the Jeffreys-rule prior and $a=1$ for the independence Jeffreys prior. As shown in \cite{fonsecaetal2008}, $\pi(\nu)$ is a proper density function of $\nu$ for both priors. 

\section{Posterior propriety} \label{SectionPropriety}

Verifying the existence of the posterior distribution is mandatory under the prior in (\ref{eqPrior}), which is not a proper probability density function of $(\beta,\sigma^2,\nu)$. Corollary 2 in \cite{fonsecaetal2008} states that, provided $n>p$, the posterior distribution is well-defined under the Jeffreys-rule and the independence Jeffreys priors. Their proof refers to Theorem 1 in \cite{fs1999}, but unfortunately this theorem does not cover the Jeffreys-rule prior, as it assumes that $a=1$ in (\ref{eqPrior}). A necessary condition for the existence of the posterior distribution is now  provided in the following Theorem. 


\begin{theorem} \label{theoProprietyLST2}
Let $y=(y_1,\ldots,y_n)'$ be $n$ independent observations from model (\ref{eqLinearRegression}). Define $X=(x_1, \ldots, x_{n})'$ and assume that $n>p$ and the rank of $X$ is $p$. Under the prior in (\ref{eqPrior}), a necessary condition for posterior propriety is $\pi(\nu)=0$ for all $\nu \in \left(0,\frac{2a-2}{n-p}\right]$. \end{theorem}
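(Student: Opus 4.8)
The plan is to show that the marginal likelihood $m(y)$, whose finiteness is equivalent to posterior propriety, is infinite as soon as $\pi(\nu)>0$ on a subset of $\bigl(0,\frac{2a-2}{n-p}\bigr]$ of positive Lebesgue measure; the stated necessary condition then follows by contraposition. Since all integrands are nonnegative, Tonelli's theorem gives $m(y)\propto\int_0^{\infty}\pi(\nu)J(\nu)\,d\nu$, where
\[
J(\nu):=\int_0^{\infty}\int_{\mathbb{R}^p}\frac{1}{(\sigma^2)^{a}}\prod_{i=1}^{n}f_\nu\bigl(y_i\,|\,x_i'\beta,\sigma^2\bigr)\,d\beta\,d\sigma^2
\]
and $f_\nu(\,\cdot\,|\,\mu,\sigma^2)$ denotes the Student-$t$ density with $\nu$ degrees of freedom, location $\mu$ and squared scale $\sigma^2$. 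Writing $f_\nu(y_i\,|\,x_i'\beta,\sigma^2)=k_\nu\,(\sigma^2)^{-1/2}\bigl(1+\frac{(y_i-x_i'\beta)^2}{\nu\sigma^2}\bigr)^{-(\nu+1)/2}$ with $k_\nu>0$, the task reduces to controlling $\int_0^{\infty}(\sigma^2)^{-(a+n/2)}\int_{\mathbb{R}^p}\prod_i\bigl(1+\frac{(y_i-x_i'\beta)^2}{\nu\sigma^2}\bigr)^{-(\nu+1)/2}\,d\beta\,d\sigma^2$ near $\sigma^2=0$, and it suffices to prove that $J(\nu)=+\infty$ for every $\nu\in\bigl(0,\frac{2a-2}{n-p}\bigr]$.

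To bound the inner integral over $\beta$ from below I would reparametrise. Since $\mathrm{rank}(X)=p$ there is a set $S\subseteq\{1,\dots,n\}$ with $|S|=p$ such that $X_S:=(x_i')_{i\in S}$ is invertible; then $\beta\mapsto\eta:=(y_i-x_i'\beta)_{i\in S}$ is an affine bijection of $\mathbb{R}^p$ with constant Jacobian, and for $i\notin S$ one has $y_i-x_i'\beta=c_i+w_i'\eta$ for fixed $c_i,w_i$. Substituting $\eta=\sqrt{\sigma^2}\,u$ turns $d\beta$ into a constant multiple of $(\sigma^2)^{p/2}\,du$, makes every factor with $i\in S$ free of $\sigma$, and turns each factor with $i\notin S$ into $\bigl(1+\frac{(c_i+\sqrt{\sigma^2}\,w_i'u)^2}{\nu\sigma^2}\bigr)^{-(\nu+1)/2}$. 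Restricting $u$ to the unit cube and $\sigma^2$ to $(0,1]$, the arguments $c_i+\sqrt{\sigma^2}\,w_i'u$ stay bounded by a constant $M$, so each such factor is at least $\bigl(1+\frac{M^2}{\nu\sigma^2}\bigr)^{-(\nu+1)/2}\ge\kappa_\nu\,(\sigma^2)^{(\nu+1)/2}$ for some $\kappa_\nu>0$; combining the $n-p$ such factors with the strictly positive finite integral over the unit cube of the $p$ remaining ($\sigma$-free) factors yields
\[
\int_{\mathbb{R}^p}\prod_{i=1}^{n}\Bigl(1+\frac{(y_i-x_i'\beta)^2}{\nu\sigma^2}\Bigr)^{-(\nu+1)/2}\,d\beta\ \ge\ D_\nu\,(\sigma^2)^{\,p/2+(n-p)(\nu+1)/2},\qquad 0<\sigma^2\le 1,
\]
with $D_\nu>0$.

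The rest is an elementary one-dimensional check. Inserting this bound into $J(\nu)$ and simplifying the exponent, $-(a+n/2)+p/2+(n-p)(\nu+1)/2=(n-p)\nu/2-a$, gives $J(\nu)\ge k_\nu^n D_\nu\int_0^{1}(\sigma^2)^{(n-p)\nu/2-a}\,d\sigma^2$, and the last integral diverges exactly when $(n-p)\nu/2-a\le-1$, i.e.\ when $\nu\le\frac{2a-2}{n-p}$. Hence $J(\nu)=+\infty$ on that interval, so if $\pi(\nu)>0$ on a subset of it of positive measure then the integrand $\pi(\nu)J(\nu)$ equals $+\infty$ there and $m(y)=+\infty$, contradicting propriety. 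This forces $\pi(\nu)=0$ for almost every $\nu\in\bigl(0,\frac{2a-2}{n-p}\bigr]$; since the $\nu$-marginal of each prior in (\ref{eqPrior}) is continuous, it vanishes at every such $\nu$. The only genuinely delicate point is the lower bound on the $\beta$-integral in the second paragraph and, above all, its uniformity over $\sigma^2\in(0,1]$, which is exactly what the rescaling $\eta=\sqrt{\sigma^2}\,u$ together with restriction to a compact region delivers; everything else is Tonelli's theorem, the displayed power count, and contraposition.
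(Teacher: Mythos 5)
Your proof is correct, and it takes a genuinely different route from the paper. The paper works through the scale-mixture-of-normals representation: it integrates $\beta$ and $\sigma^2$ analytically, leaving an integral over the mixing variables $\lambda_1,\ldots,\lambda_n$ and $\nu$, and then invokes the Fern\'andez--Steel machinery (the Lemma-1-type bounds involving $\det(X'DX)$, $S^2(D,y)$ and products of order statistics of the $\lambda_i$, plus the gamma-integral inequality) to reduce matters to the integrability of $\lambda_{(n-p)}^{c-1}$ near zero, with $c=\frac{\nu(n-p)+2-2a}{2}$; impropriety appears as $c\le 0$, i.e.\ $\nu\le\frac{2a-2}{n-p}$. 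You instead stay with the Student-$t$ likelihood itself and exhibit divergence of the $\sigma^2$-integral at the origin: the affine reparametrisation through a nonsingular $p\times p$ submatrix $X_S$, the rescaling $\eta=\sigma u$ restricted to a compact region, and the uniform bound on the remaining $n-p$ factors give exactly the exponent $(n-p)\nu/2-a$, whose integral over $(0,1]$ diverges precisely on the closed interval in the statement (including the endpoint, via the logarithmic case). Your power count and the validity of the lower bound for every fixed $y$ check out, and Tonelli plus contraposition finish the argument; the only cosmetic point is that the integral argument yields $\pi(\nu)=0$ almost everywhere on the interval, which you correctly upgrade using continuity of the $\nu$-marginals of the priors at hand (the paper's own proof is no more precise on this point). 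What the two approaches buy is different: yours is more elementary and self-contained, needing no mixture representation and no external lemmas, and it isolates the degeneracy directly as an exact fit of $p$ observations at small $\sigma^2$; the paper's route expresses the same degeneracy in the dual variables $\lambda_i$ and, because the Fern\'andez--Steel bounds are two-sided, it connects naturally to the complementary propriety result for $a=1$ (Theorem 1 of Fern\'andez and Steel, 1999), which a pure lower-bound argument like yours cannot address.
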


\begin{proof}
Define $D=\mbox{diag}(\lambda_1,\ldots,\lambda_{n})$. After some algebraic manipulation, \begin{small}
\begin{equation} \label{appendixInequalityCensored} f_{Y}(y) = \int_{\mathbb{R}_+} \int_{\mathbb{R}_+} \int_{\mathbb{R}^p} \int_{\mathbb{R}_+^{n}} \frac{\prod_{i=1}^{n} \lambda_i^{\frac{1}{2}}}{ (2\pi \sigma^2)^{\frac{n}{2}}}\,\frac{e^{-\frac{1}{2\sigma^2}\left[(\beta-b)'A(\beta-b)+S^2(D,y)\right]}}{{\sqrt{\det(X'D X)}}} \frac{\pi(\nu)}{(\sigma^2)^a}\left[ \prod_{i=1}^{n} f^G_{\Lambda_i}(\lambda_i|\nu) \,d \lambda_i \right] \,d \beta \,d \sigma^2 \,d \nu, \end{equation}
\end{small}
where $A=X'D X$, $b=A^{-1} X' D y$ and $S^2(D,y)=y'D y - y' D X (X'D X)^{-1} X' D y$. Using Fubini's theorem and integrating first with respect to $\beta$, we have
\begin{equation}f_{Y}(y) \propto \int_{\mathbb{R}_+} \int_{\mathbb{R}_+} \int_{\mathbb{R}_+^{n}} (\sigma^2)^{-\frac{n+2a-p}{2}} \frac{\prod_{i=1}^{n} \lambda_i^{\frac{1}{2}}}{\sqrt{\det(X'D X)}} \,e^{-\frac{S^2(D,y)}{2\sigma^2}} \pi(\nu) \left[\prod_{i=1}^{n}  f^G_{\Lambda_i}(\lambda_i|\nu) \,d \lambda_i \right] \,d \nu \,d \sigma^2.
\end{equation}
After integrating with respect to $\sigma^2$, it follows that
\begin{equation} \label{appProperCondition} f_{Y}(y) \propto  \int_{\mathbb{R}_+} \int_{\mathbb{R}_+^n} \prod_{i=1}^{n} \lambda_i^{\frac{1}{2}} [\det(X'D X)]^{-\frac{1}{2}} [S^2(D,y)]^{-\frac{n+2a-p-2}{2}} \pi(\nu) \left[ \prod_{i=1}^{n}  f^G_{\Lambda_i}(\lambda_i|\nu) \,d \lambda_i \right] \,d \nu, \end{equation}
as long as $n+2a-p-2>0$ and $S^2(D,y)>0$. Since $n>p$ we know that $S^2(D,y)>0$ a.s. Analogously to Lemma 1 in \cite{fs1999}, $f_{Y}(y)$ has upper and lower bounds proportional to
\begin{equation} \label{appProperCondition1} \int_{\mathbb{R}_+} \int_{0<\lambda_1<\cdots<\lambda_{n} <\infty} \prod_{i \notin \{m_1,\ldots,m_p\}} \lambda_i^{\frac{1}{2}} \lambda_{m_{p+1}}^{-\frac{n+2a-p-2}{2}} \pi(\nu) \left[\prod_{i=1}^{n} f^G_{\Lambda_i}(\lambda_i|\nu) \,d \lambda_i \right] \,d \nu, \end{equation} where \begin{small}\begin{eqnarray*} \prod_{i=1}^p \lambda_{m_i} & \equiv & \max\left\{\prod_{i=1}^p \lambda_{l_i}: \det\left(x_{l_1} \cdots x_{l_p}\right) \neq 0, l_1, \ldots l_p \in \{1,\ldots,n\}\right\},\\
\prod_{i=1}^{p+1} \lambda_{m_i} & \equiv & \max\left\{\prod_{i=1}^{p+1} \lambda_{l_i}: \det\left(%
\begin{array}{ccc}
  x_{l_1} & \cdots & x_{l_{p+1}} \\
  y_{l_1} & \cdots & y_{l_{p+1}} \\
\end{array}%
\right) \neq 0, l_1, \ldots l_p \in \{1,\ldots,n\} \right\}. \end{eqnarray*}\end{small}
Upper and lower bounds for the previous integral can be found using the inequality \citep{fs1999,fs2000} \begin{equation} \label{appendixInequality} \frac{\lambda_{i+1}^v}{v} \,e^{-r\lambda_{i+1}} \leq
\int_0^{\lambda_{i+1}} \lambda_i^{v-1} \,e^{-r\lambda_i} \,d
\lambda_i \leq \frac{\lambda_{i+1}^v}{v}, \hspace{1cm} r,v>0.
\end{equation} The integral in (\ref{appendixInequality}) is not finite for $v\le 0$. Barring a set of zero Lebesgue measure, $\lambda_{m_{p+1}}=\lambda_{(n-p)}$, where $\lambda_{(n-p)}$ is the $(n-p)$-th order statistic of $\lambda_1,\ldots,\lambda_{n}$. After integrating with respect to the $n-p-1$ smallest $\lambda_i$'s, (\ref{appProperCondition1}) has a lower bound proportional to
\begin{small}\begin{equation} \label{appendixLower}
\int_0^{\infty}\int_{\Lambda^*}\left[\frac{\left(\frac{\nu}{2}\right)^{\frac{\nu}{2}}}{\Gamma\left(\frac{\nu}{2}\right)}\right]^{n-p}
\frac{\left[\frac{\nu+1}{2}\right]^{-(n-p-1)}}{(n-p-1)!}
\lambda_{(n-p)}^{c-1} \,e^{-\frac{(n-p)\nu}{2}\lambda_{(n-p)}} \,d \lambda_{(n-p)} \left[\prod_{i=n-p+1}^{n}  f^G_{\Lambda_i}(\lambda_{(i)}|\nu) \,d \lambda_{(i)} \right] \pi(\nu) \,d \nu,
\end{equation} \end{small}
where $\Lambda^*=\{(\lambda_{(n-p)},\ldots,\lambda_{(n)}):0<\lambda_{(n-p)}<\cdots<\lambda_{(n)}<\infty\}$ and $c=-\frac{n+2a-p-3}{2}+\frac{\nu}{2}+\frac{(n-p-1)(\nu+1)}{2}=\frac{\nu(n-p)+2-2a}{2}$. When integrating with respect to $\lambda_{(n-p)}$ we need  $c>0$ in order to have a finite integral in (\ref{appendixLower}). Hence, the propriety of the posterior distribution requires $\nu > \frac{2a-2}{n-p}$.
\end{proof}

As a consequence, the posterior distribution of $(\beta,\sigma^2,\nu)$ is not proper if $a>1$ and the range of $\nu$ is $(0,\infty)$. In particular, the Jeffreys-rule prior (for which $a=1+p/2$) does not lead to a proper posterior distribution and Bayesian inference is thus precluded with this prior. The independence Jeffreys prior satisfies the necessary condition in Theorem \ref{theoProprietyLST2}, but this does not guarantee posterior existence. Nevertheless, posterior propriety under the independence Jeffreys prior for $n>p$ is ensured by Theorem 1 in \cite{fs1999}, as indicated in \cite{fonsecaetal2008}.

\section{Concluding remarks} \label{SectionConclusion}

The choice of a prior distribution for the degrees of freedom under Student-$t$ sampling is a very challenging task. \cite{fonsecaetal2008} adopt Jeffreys principles to find objective priors for $\nu$. This is an important addition to the previous literature in which much more ad-hoc priors were  used ({\it e.g.}~the exponential prior in Geweke, 1993 and Fern\'andez and Steel, 1999)\nocite{geweke1993}. Here we show that the Jeffreys-rule prior does not produce a proper posterior distribution, in contrast to the claim in \cite{fonsecaetal2008}. We believe it is crucial to point this out to the scientific community to avoid  meaningless inference and misleading conclusions. The Jeffreys-rule prior under Student-$t$ sampling has also been used in \cite{ho2012} and \cite{villawalker2013}. For this prior, \cite{fonsecaetal2008} and \cite{villawalker2013} observe very poor frequentist coverage of the 95\% credible intervals for $\nu$ when the sample size is small ($n=30$). 
For small sample size the lower bound required on the support of $\nu$ (here equal to $p/(n-p)$) may easily be violated by samples from the posterior, so this poor empirical performance might be linked to the impropriety shown here. 
Posterior propriety can be verified under the independence Jeffreys prior, and its use as an objective prior is recommended to practitioners.

\subsection*{Acknowledgements}
Catalina Vallejos acknowledges research
support from the University of Warwick and from the Department of Statistics of the Pontificia Universidad Católica de Chile. We thank Marco Ferreira for his constructive comments.

\bibliography{SMLNpaper}
\bibliographystyle{Chicago}

\end{document}